\DeclareMathAlphabet{\mathpzc}{OT1}{pzc}{m}{it}
\newcommand{\Prf}{\mathpzc{Pr}}
\newcommand{\LRf}{\mathpzc{Lr}}
\newcommand{\LR}{\textrm{LR}}
\newenvironment{sciabstract}{%
\begin{quote} \bf}
{\end{quote}}
\newcounter{lastnote}
\newlength{\tempdim}
 \tikzset{
  big arrow/.style={
    decoration={markings,mark=at position 1 with {\arrow[scale=2,#1]{>}}},
    postaction={decorate},
    shorten >=0.4pt},
  big arrow/.default=black}
\theoremstyle{plain}
\newtheorem{teo}{Theorem}
\title{Impact of model choice on LR assessment in case of rare haplotype match (frequentist approach)}
\author
{Giulia Cereda,$^{1\ast, 2}$ \\
\\
\normalsize{$^{1}$University of Lausanne, Faculty of Law, Criminal Justice and Public Administration}\\
\normalsize{$^{2}$Leiden University, Mathematical Institute}\\
\\
\normalsize{$^\ast$To whom correspondence should be addressed; E-mail: giulia.cereda7@gmail.com.}
}
\date{}
\begin{document} 


\baselineskip18pt


\maketitle


\begin{sciabstract}
The likelihood ratio (LR) measures the relative weight of forensic data regarding two hypotheses. Several levels of uncertainty arise if frequentist methods are chosen for its assessment: the assumed population model only approximates the true one and its parameters are estimated through a limited database. Moreover, it may be wise to discard part of data, especially that only indirectly related to the hypotheses. Different reductions define different LRs. Therefore, it is more sensible to talk about ``a'' LR instead of ``the'' LR, and the error involved in the estimation should be quantified. Two frequentist methods are proposed in the light of these points for the `rare type match problem', that is when a match between the perpetrator's and the suspect's DNA profile, never observed before in the database of reference, is to be evaluated.

\end{sciabstract}

\noindent
\emph{Key words}: Evidence evaluation, frequentist approach, likelihood ratio, rare type match, uncertainty, Y chromosome STR.


\section{Introduction}

One of the main challenges of forensic science is to evaluate how much some evidence can be helpful to discriminate between hypotheses of interest. For instance, a typical piece of evidence may be a DNA trace which is found at the crime scene and whose profile matches a known suspect's DNA profile. 
A couple of mutually exclusive hypotheses is typically defined, of the kind of `the crime stain came from the suspect' ($h_p$) and `the crime stain came from an unknown donor' ($h_d$). 
The largely accepted method to perform this evaluation is the calculation of the \emph{likelihood ratio}, a statistic that expresses the relative plausibility of the observations under the two hypotheses \citep{robertson:1995, evett:1998, aitken:2004,balding:2005, steele:2014}.

The definition of the likelihood ratio depends on whether a Bayesian or a frequentist approach is chosen.
In the Bayesian context, after a couple of hypotheses is given, the likelihood ratio is defined as
\begin{equation}
\label{eqa}
\LR=\frac{\Pr(D=d\mid H=h_p)}{\Pr(D=d \mid H=h_d)},
\end{equation}
where $\Pr$ is the Bayesian probability, reflecting the expert's belief on the joint distribution of the random variables of the model, namely $D$ (representing the data), $H$ (representing the hypotheses), and $\Theta$ (the nuisance parameter(s)).  

On the other hand, in a frequentist context, the nuisance parameter $\theta$ and the hypotheses $h$ are considered to be fixed (unknown) quantities. The frequentist probability (here denoted as $\mathcal{Pr}$) can be expressed in terms of the Bayesian $\Pr$, in the following way: $\mathcal{Pr}_\theta(\cdot \mid h):=\Pr(\cdot\mid \Theta=\theta, H=h),\ \forall h$. The frequentist likelihood ratio can be thus expressed as
\begin{equation}\label{e12}
\LRf_{\theta}=\frac{\mathcal{Pr}_{\theta}(D=d\mid h_p)}{\mathcal{Pr}_{\theta}(D=d \mid h_d)}.
\end{equation}

It is important to consider that different reductions of the data $D$ can be carried out, each corresponding to a different frequentist likelihood ratio. Moreover, unless we choose nonparametric solutions, a model choice is also performed, and  there are often parameters to be estimated. 
Hence, two further levels of uncertainty have to be added to the initial uncertainty regarding which hypothesis is the true one.

The main aim of this paper is to provide the message that, if a frequentist approach is chosen and an estimation is needed, (i) it is more sensible to talk about ``a'' likelihood ratio instead of ``the'' likelihood ratio, and (ii) a quantification of the error involved in the estimation of the likelihood ratio is to be provided along with the estimated value.

It is believed in the forensic field that the use of frequentist methods to assess the likelihood ratio is not coherent, since the likelihood ratio has to be used within the Bayes'  theorem context, as the way to update prior odds to posterior odds. However, frequentists may be interested as well in the likelihood ratio, seen as a tool to measure the evidential value of data, independently of the Bayes' theorem. Moreover, literature presents many approaches to calculate the likelihood ratio, wrongly defined as Bayesian, which in fact plug in Bayes estimates into a likelihood ratio defined in a frequentist way \citep[for a discussion, see][]{cereda:2015}. We thus believed that it is important to study and discuss the two approaches (the Bayesian and the frequentist) separately, in order to define coherent methodologies and avoid unnecessary hybrid methods. This is done in Section~\ref{fvb}.

In forensic science, a very challenging problem is the so-called \emph{rare type match}, the situation in which there is a match between the characteristics of some recovered material and the corresponding characteristics of the control material, but these characteristics have not been observed yet in previously collected samples (i.e., they do not occur in any existing database of interest for the case). This constitutes a problem because of the presence of a nuisance parameter that is (related to) the proportion of individuals (or items) in possess of the matching characteristic in a reference population: this proportion is, in standard frequentist practice, estimated using the relative frequency of the characteristic in a previously collected database. Thus, in case of rare type match, there's the need for different solutions.

This paper discusses two frequentist methods to provide a likelihood ratio in the rare type match case, based respectively on the parametric discrete Laplace method \citep{andersen:2013b}, and on a generalization of the nonparametric Good-Turing estimator \citep{good:1953}. The latter looks similar to Brenner's `$\kappa$-method' \citep{brenner:2010}, but is different inasmuch it does not need any assumption and provides two different frequencies, one for the prosecution's and one for the defense's point of view. We plan to compare the two methods in a future paper. 

More specifically, these two methods are here proposed as an answer to the problem of the rare Y-STR haplotype match: the situation in which the matching (and previously unseen) characteristic is a Y-STR profile.
Each of the two methods is analyzed in the light of points (i) and (ii) discussed above, by carefully specifying the data reduction, the chosen probability model, and with a discussion on the different levels of error involved in the estimations.

Sections~\ref{LR} and \ref{levels} draw out in depth the rationale behind points (i) and (ii) above, Section~\ref{ystr} describes the paradigmatic example of the rare Y-STR haplotype match problem, to which we will apply the discrete Laplace method (Section~\ref{DLm}), and the Generalized-Good method (Section~\ref{gg}) according to the guidelines exposed in the opening sections.

\section{Bayesian versus frequentist approach to likelihood ratio assessment} \label{fvb}

The task of a forensic statistician is to measure the extent to which some given data favors one hypothesis instead of the other.
For instance, the data at disposal may consist of a DNA trace found at the crime scene which matches a suspect's DNA profile, and of a database of collected DNA profiles from a reference population or past cases. This is a paradigmatic example to which, from now on, we will refer generically as ``the DNA example''. 
The prosecution and defense hypotheses are usually of the kind ``the trace has been left by the suspect'' ($h_p$) and ``the trace has been left by an unknown person'' ($h_d$).
Denote with $h\in \{h_d, h_p\}$ the unknown true hypothesis, and with $\theta$ the nuisance parameter involved in the assessment of the likelihood ratio. In the DNA example, the vector made of all the DNA frequencies can be thought of as the nuisance parameter $\theta$.
Notice that there is a difference between $h$ and $\theta$: one ($h$) is the parameter which we `test' through the likelihood ratio, the other ($\theta$) is a nuisance parameter involved in the likelihood ratio assessment. 
It is often possible to split the data $D$ into $E$, evidence directly related to the crime, and $B$, additional information not related to the crime and only pertaining to the nuisance parameter $\theta$. In the DNA example, we can take as $E$ the couple of matching profiles (that of the trace and that of the suspect) and as $B$ the database of reference.
$D$, $E$, and $B$ can be regarded as random variables, such that $D=(E,B)$.

Bayesian and frequentist methods differ in how they consider the parameters $\theta$ and $h$. In a Bayesian context they are modelled through random variables $\Theta$ and $H$, which are given prior distributions $p(\theta)$ and $p(h)$. Frequentists consider them as fixed (i.e., without distribution) unknown quantities. 
Regardless of the type of approach which is chosen, some model assumptions concerning $E$ and $B$, $\theta$ and $h$ can be made:

\begin{description}
\item{\textbf{a.}} The distribution of $B$ given $h$ and $\theta$, only depends on $\theta$.
\item{\textbf{b.}} $B$ is independent of $E$, given $h$ and $\theta$.
\end{description}

In the DNA example, condition \textbf{a} holds if for instance the database is collected before the crime, since the sampling mechanism to obtain the database of reference is independent of which hypothesis is correct. Condition \textbf{b} holds if the suspect has been found on the ground of different evidence that has nothing to do with DNA.

\subsection{The Bayesian approach}\label{bay}
\begin{figure}[htbp]
\begin{center}
\begin{tikzpicture}
  \node[draw, ellipse, minimum width=1.2cm]                (a) at (0,0)  { $\Theta$ };
  \node [draw, ellipse, minimum width=1.2cm]                  (b) at (2,0)  { $H$ };
  \node [draw, ellipse, minimum width=1.2cm]                      (c) at (2,-2) { $E$};
  \node [draw, ellipse, minimum width=1.2cm]              (d) at (0,-2) { $B$};
   \draw[black, big arrow] (a) -- (c);
 \draw[black, big arrow] (b) -- (c);
 \draw[black, big arrow] (a) -- (d);
\end{tikzpicture}
\caption{Bayesian network representing the dependency relations between $E$ (evidence of the case), $B$ (background data), $\Theta$ (nuisance parameter), and $H$ (hypotheses of interest).}\label{bnet_t1}

\end{center}
\end{figure}

A full Bayesian model is defined by giving the prior joint probability distribution for all the random variables of the model (here $E$, $B$, $H$ and $\Theta$). 
It can be represented by the Bayesian network of Figure~\ref{bnet_t1}, which is in turn equivalent to the following Bayesian reformulation of conditions \textbf{a}, and \textbf{b}, with a third additional condition:
\begin{description}
\item{\textbf{Bayesian a.}} $B$ is conditionally independent of $H$ given $\Theta$.
\item{\textbf{Bayesian b.}} $B$ is conditionally independent of $E$ given $\Theta$ and $H$.
\item{\textbf{Bayesian c.}} $\Theta$ is unconditionally independent of $H$.
\end{description}
Condition \textbf{Bayesian c}\ is guaranteed for instance if prior beliefs on $\theta$ and on $h$ are assessed by people with different responsibilities and tasks: a judge for $h$ and a forensic DNA expert (or a statistician) for $\theta$. 
The joint prior can be factorized as follows, by looking at the structure of the Bayesian network or, equivalently, using the three conditions above:
$p(\theta, h, b, e) = p(\theta)p(h) p(b|\theta) p(e|\theta, h).$
By choosing a prior distribution for $\theta$ and $h$ which reflects expert's beliefs, the Bayesian probability is an expression of the subjective credence of the experts. The distribution of all other variables given $\theta$ and $h$ is defined by the structure of the model, and needs no subjective assessment.

The Bayesian likelihood ratio can be derived in the following way:
\begin{align*}
\LR=&\frac{\Pr(E=e, B=b\mid H=h_p)}{\Pr(E=e, B=e\mid H=h_d)}=\frac{\Pr(E=e\mid B=b, H=h_p)}{\Pr(E=e\mid B=b, H=h_d)}
= \frac{\int p(e \mid b, h_p, \theta)\, p(\theta \mid  b, h_p) \text{d}\theta}{\int p(e \mid  b, h_d, \theta)p(\theta \mid  b, h_d) \text{d}\theta }
\\=&  \frac{\int \theta\, p(\theta \mid  b)\, \text{d}\theta}{\int \theta^2\, p(\theta \mid b)\, \text{d}\theta}
= \frac{\mathbb{E}(\Theta \mid B=b)}{\mathbb{E}(\Theta^2\mid B=b)}.
\end{align*}
Some simplifications have been carried out because of conditions \textbf{a}, \textbf{b}, and \textbf{c}. 
Since it is possible to marginalize out over all values of $\Theta$, using its distribution, there's no need to estimate the likelihood ratio, or to account for uncertainties, if a proper full Bayesian approach is chosen.

In the rest of the paper we only focus on frequentist methods to solve the rare Y-STR haplotype match problem, but a companion paper presents a similar study on Bayesian methods \citep{cereda:2015}. 

\subsection{The frequentist perspective}\label{freq}
The difference between frequentist and Bayesian methods regards parameters $h$ and $\theta$: for a frequentist they are fixed quantities, whose values correspond to, respectively, the unknown true value of $\theta$ and the correct hypothesis. One can see frequentist models as Bayesian models where the distributions chosen for $\Theta$ and $H$ give probability one to values $\theta$ and $h$, respectively. Also, one can express the frequentist probability $\Prf$ in terms of the Bayesian probability $\Pr$, in the following way: 
$\Prf(\cdot\mid h):= \Prf_{\theta}(\cdot\mid h)= \Pr(\cdot | H=h, \Theta=\theta).$
For frequentist statisticians, there is a true, `physical' probability which governs the situation at hand: according to the prosecution this true probability is $\Prf_{\theta}(\cdot\mid h_p)$, while according to the defense it is $\Prf_{\theta}(\cdot\mid h_d)$,  with $\theta$ set to its true (unknown) value. 

Conditions \textbf{a} and \textbf{b} can be rephrased, in a frequentist language as:

\begin{description}
\item{\textbf{Frequentist a.}} $\Prf_{\theta}(B=b \mid  h_p)=\Prf_{\theta}(B=b \mid  h_d)$, for all $\theta$ and $b$.
\item{\textbf{Frequentist b.}} $\Prf_{\theta}(E=e \mid B=b, h)=\Prf_{\theta}(E=e \mid  h)$, for all $\theta, h, e$, and $b$.
\end{description}

It holds that:
\begin{equation*}\label{eq22x}
\begin{aligned}
	\LRf	&=\frac{\Prf(D=d \mid h_p)}{\Prf(D=d \mid h_d)}=\frac{\Prf(E=e,B=b \mid h_p)}{\Prf(E=e,B=b \mid h_d)}
	=\frac{\Prf(E=e \mid B=b,h_p)}{\Prf(E=e \mid B=b,h_d)}\frac{\Prf(B=b \mid h_p)}{\Prf(B=b \mid h_d)}.
\end{aligned}
\end{equation*}

The index $\theta$ has been omitted for ease of notation. Thanks to conditions \textbf{Frequentist a} and \textbf{b}, the likelihood ratio can be expressed as
\begin{equation}
\label{eq22}
\LRf=\frac{\Prf(E=e \mid h_p)}{\Prf(E=e \mid h_d)}.
\end{equation}

Even though the two alternative ways of writing the likelihood ratio expressed by equations~\eqref{e12} and~\eqref{eq22} are theoretically different, and mean two different things, they have the same value. This implies that part of the information, namely $B$, is not useful to discriminate between the two hypotheses of interest.
Stated otherwise, when knowing $\theta$, $B$ is irrelevant to determine the likelihood ratio, i.e.\ to decide about parameter $h$. However, it may play an important role in the estimation of parameter $\theta$.
For instance, getting back to the DNA example, the database ($B$) is often useful to estimate the frequencies of the different haplotypes.

Notice that, in order for \eqref{eq22} to hold, \textbf{b} can be modified to something less strong: 
\begin{description}

\item{\textbf{Frequentist b$^*$.}} $\displaystyle\frac{\Prf_{\theta}(E=e \mid B=b, h_p)}{\Prf_{\theta}(E=e \mid B=b, h_d)}=\frac{\Prf_{\theta}(E=e \mid  h_p)}{\Prf_{\theta}(E=e \mid  h_d)}$ for all $e$, $b$, and $\theta$.
\end{description}
which is equivalent to ask that updating the likelihood ratio for the observation of $B$ to take into account the observation of $E$, does not change anything. 

Furthermore, while conditions \textbf{a} and \textbf{b$^*$} imply \eqref{eq22}, the converse is not true. Formulation \eqref{eq22} is instead equivalent to a weaker condition, that is:
\begin{description}
\item[Frequentist c.]$ \Prf_{\theta}(B=b \mid E=e, h_p)=\Prf_{\theta}(B=b \mid E=e, h_d)$, for all $\theta$. \end{description}

This can be seen by the following alternative development of the likelihood ratio ($\theta$ omitted):
\begin{equation}
\begin{aligned} 
\label{eq22xnj}
\LRf&=\frac{\Prf(D=d \mid h_p)}{\Prf(D=d \mid h_d)}=\frac{\Prf(B=b \mid E=e,h_p)}{\Prf(B=b \mid E=e,h_d)}\frac{\Prf(E=e \mid h_p)}{\Prf(E=e \mid h_d)}=
\frac{\Prf(E=e \mid h_p)}{\Prf(E=e \mid h_d)}.
\end{aligned}
\end{equation}
It follows that:
\begin{equation}
\label{eq22xmm}
 \textbf{c}\Leftrightarrow \LR=\frac{\Prf(E=e \mid h_p)}{\Prf(E=e \mid h_d)}.  \end{equation}

Notice that frequentists use a likelihood ratio $\LRf_{\theta}$, which can be written in terms of the Bayesian LR as $\LR|\Theta=\theta$ (read ``LR given $\theta$''), and attempt to get close to $\theta$ by choosing some estimator $\widehat{\theta}$. This leads to the so-called \emph{plug-in estimator} $\widehat{\LRf_{\theta}}=\LRf_{\widehat{\theta}}=\LR|(\Theta=\widehat{\theta})$. However, that's not the only option, as we will see for the method explained in Section~\ref{gg}.

It is important to notice that the frequentist approach may be represented by the same Bayesian network of Figure~\ref{bnet_t1}, where the states of nodes $\Theta$ and $H$ are instantiated to particular values $\theta$ and $h$, respectively. This shows that actually the two approaches don't disagree on the structure of the model regarding $E$ and $B$. Only, Bayesians add ingredients to the model by allowing $\Theta$ and $H$ to have a distribution. Stated otherwise, the Bayesian approach is defined by the very same frequentist conditions  \textbf{a} and \textbf{b}, with the addition of condition \textbf{c} about the independence of $\Theta$ and $H$.

\section{ Data reduction}\label{LR}
 Let us denote with $\mathcal{D}$ all the data given to the expert in the form of a dossier, which he has to ``translate'' into a well-defined mathematical object. 
To evaluate the entirety of the data at the expert's disposal is often a delusion, from which the need for a reduction of $\mathcal{D}$ to something less informative, but of more feasible evaluation, which we denote as $D$. 
Often the database contains only information about a limited number of loci, and this implies that information about other loci of the crime stain can't be used. This constitutes already a first reduction of the data.
Other kinds of reductions are performed in order to gain in terms of precision of the estimates. Especially in a situation with many nuisance parameters, it can be wise to discard the part of data which primarily tells us about the nuisance parameters, and only indirectly about the ultimate question of interest (i.e., which hypothesis is more likely to be true).
In fact, it could be very wise to reduce the data $\mathcal{D}$ to a much smaller amount of information, because the likelihood ratio based on the data reduction is much more precisely estimated than one based on all data. However, there's a limit to this: the reduction of $\mathcal{D}$ into $D$ comes with a cost: the stronger the reduction, the less the corresponding likelihood ratio value is discriminating of the two hypotheses, because less information is less powerful to that purpose. We have to make a compromise between a gain in terms of precision and a loss in strength of the evidence. This will be discussed more in detail in Section~\ref{trade}.

Once a particular reduction $D$ has been defined, the frequentist likelihood ratio ($\LRf$) can be defined as in \eqref{e12}.
It is easy to understand that there isn't a unique way to reduce $\mathcal{D}$ and that each choice entails the definition of a different likelihood ratio. For instance, in the DNA example, one can consider a profile made of more or fewer loci. Another kind of reduction will be presented in Section~\ref{gg}.
Different choices of $D\subsetneq \mathcal{D}$ lead to different likelihood ratios. Therefore, \emph{it is better to refer to ``a'' likelihood ratio instead of to ``the'' likelihood ratio.} This was already stated in \citet{dawid:2001}, even though regarding hypotheses instead of data.

In the literature different choices of $D\subsetneq \mathcal{D}$ and `Pr' are proposed, each corresponding to a different likelihood ratio to be estimated. These choices are often only implicit and one of the aims of this research is to make explicit the reduction which corresponds to two selected methods, by looking for the corresponding $E$ and $B$.

\section{Different levels of uncertainty}\label{levels}

The likelihood ratio measures the relative strength of support given by the data to a hypothesis over an alternative. Clearly, it is useful when there is uncertainty about which of the two hypotheses is true (to be more precise, it may also be the case that none of the alternatives is correct, and the likelihood ratio continues to be meaningful). 
Along with this first basic initial uncertainty about the state of the affairs, two more levels of uncertainty arise in the attempt of calculating the likelihood ratio.

For a frequentist statistician, the likelihood ratio is a ratio of probabilities based usually on a model $\mathcal{M}$ which is at best only a good approximation to the truth. 
Moreover, they have to estimate parameters of that model by fitting it to the data in some database.
Stated otherwise, after a particular choice of what is the data $D$ to be considered, a population model is to be chosen and its parameters estimated using a limited sample. Some forensic literature \citep{morrison:2010, stoel:2012, curran:2002, curran:2005} already pointed out the necessity for uncertainty assessment in the likelihood ratio estimation, even though they don't differentiate among levels.
On the other hand, for a true Bayesian statistician there's no need for estimation, and no additional levels of uncertainty to be added, since the definition of the Bayesian $\Pr$ already includes not only beliefs about chances when picking people from that population, but also beliefs about parameters of the models, and beliefs about models.

This discussion may hopefully put an end to the debate as to whether it makes sense to talk about  `estimation' and `uncertainty assessment' for the likelihood ratio.
\citet{stoel:2012}  believe that ``there are strong arguments
for the notion of a ``true'' but unknown value of the likelihood ratio, given the relevant hypotheses and
background information, and that it is important to consider the uncertainty. Ignoring the
uncertainty can be strongly misleading''. This point of view is also shared in \citet{sjerps:2015}. On the other hand, to talk about estimation of the likelihood ratio is defined as ``internally inconsistent, and hence misconceived'' by \citet{taroni:2015}.
Both the points of view are correct, if correctly put into context: if a frequentist approach is chosen it is sensible to talk about `estimation' and to deal with uncertainty assessment. On the other hand, in a full Bayesian context, they are misplaced.

Notice that Bayesianism is theoretically a very powerful interpretation of probability, but when it comes to applying Bayesian theory for practical purposes, even the most fervent Bayesian has to strike a balance between what is feasible and what is theoretically right and coherent according to the Bayesian perspective. He typically chooses a particular model as the correct one (as frequentists do), and/or he has to put convenient (rather than realistic) prior distributions on the parameters. 
Hence, whether Bayesian or frequentist approaches are chosen, the attempt to produce the likelihood ratio leads to several levels of uncertainty which should be accounted for.

We will now discuss the two additional levels of uncertainty mentioned before. The second level of uncertainty pertains to the choice of a particular population model, which is only an approximation of the truth. This level of uncertainty may be reduced using nonparametric methods, that are based on fewer assumptions.

Given a particular population model, the third level of uncertainty pertains to the fact that the population parameters are not known. This may involve estimation of parameters (such as in the discrete Laplace method of Section~\ref{DLm}) or the direct estimation of the probabilities of interest (as in the Generalized-Good method described in Section~\ref{gg}) and the quality of the estimates severely depends on the size of the available databases. This level of uncertainty pertains both to parametric and nonparametric methods. 

The evidential value reported depends on all the levels of uncertainty which afflict the estimation of the likelihood ratio. Thus, it is of the utmost importance to report the likelihood ratio value along with (1) an explicit definition of which data $D$ we want to evaluate through that likelihood ratio, and (2) a discussion (and if possible a quantification) of the levels of uncertainty that afflict the reported value.

\subsection{Estimating the weight of evidence}
Instead of estimating the likelihood ratio, it is more sensible to directly estimate its logarithm, sometimes called \emph{relevance ratio} or \emph{weight of evidence} \citep{good:1950, aitken:1998, aitken:2004}.
This is because the interpretation of the likelihood ratio values goes through orders of magnitude 10, and when a value is reported, it is important to control the relative error, rather than the absolute error. In fact, the first is meaningful in itself while the second depends on the particular value of the likelihood ratio. 
For the very same reasons why the verbal equivalent scale \citep{aitken:1998} is based on logarithm.
Furthermore, both the odds form of Bayes' theorem and the formula to combine likelihood ratios from independent pieces of evidence involve a multiplicative relationship that becomes a handier additive relation if logarithm is taken \citep{schum:1994}. 
Moreover, the logarithm helps in presenting large numbers in a compact way, of more easy comprehension, and it is symmetric with respect to prosecution's and defense's hypothesis: this may be useful if one wants to invert the weight of evidence to consider the defense's proposition \citep{aitken:2004}.

\section{The rare Y-STR haplotype problem}\label{ystr}

Consider the situation in which a piece of evidence is recovered at the crime scene, and a suspect turns out to have the same analyzed characteristics (for instance the same DNA profile) as the crime scene evidence. 
The prosecution claims that the suspect left the evidence, defense claims that someone else (with the same DNA profile) left it. 
The capability of the match to discriminate between the competing hypotheses is evaluated by comparing how probable it is under each of the hypotheses. This depends on the proportion of individuals in possess of the same profile in the population of possible perpetrators: the rarer the profile the more the suspect is in trouble. 
This proportion is usually unknown, the only available data being a sample of DNA profiles from the population, in the form of a reference database. 
The \emph{naive estimator} uses the relative frequency of the profile in the database as an estimate for $\theta$. Problems arise when this frequency is 0, the so-called ``rare type match''. This problem is so substantial that it has been defined ``the fundamental problem of forensic mathematics'' by \citet{brenner:2010}. As an alternative to the empirical frequency estimator, one can use the \emph{add-constant} estimators, which adds a constant to the count of each type, included the unseen ones. The most well known is the \emph{add-one} estimator, due to \citet{laplace:1814}, and the \emph{add-half} estimator of \citet{krichevsky:1981}. However, to use these methods one needs to know the number of possible unseen types and there are problems if this number is large compared to the sample size (see \citet{gale:1994} for additional discussion). 
Another possibility is the `rule of three', proposed by \citet{louis:1981}. It states that $3/n$ is a good approximation of the $95\%$ upper bound for the frequency, if $n$ is the size of the database. 

Of interest for this paper is the nonparametric \emph{Good-Turing estimator} of \citet{good:1953}, based on an intuition on A.\ M.\ Turing. It is an estimator for the total unobserved probability mass which is based on the proportion of singletons in the sample. For a comparison between \emph{add one} and \emph{Good-Turing} estimator, see \citet{orlitsky:2003}.

The \emph{naive estimator} and the \emph{Good-Turing estimator} are
in some sense complementary \citep{anevski:2013}: the first gives a good estimate for the observed types and the second for the probability mass of the unobserved ones. 
Lastly, the \emph{high profile estimator}, introduced by \citet{orlitsky:2004}, extends the tail of the \emph{naive estimator} to the region of unobserved types. This estimator has been improved by \citet{anevski:2013} that also provides the consistency proof. 

The rare type match problem is common, for instance, in case a new kind of forensic evidence is involved, and for which the available database size is still limited. One example is the case of DIP-STR markers \citep[e.g.][]{cereda:2014b}.
The same happens when Y-chromosome (or mitochondrial) DNA profiles are used: because of the lack of recombination involved when offspring DNA is generated from the DNA of the parents, each haplotype must be treated as a unit (the
match probability can't be obtained by multiplication across loci) and the set of possible haplotypes is extremely
large. As a consequence, most of the Y-STR haplotypes are not represented in the database.

In the rest of the paper, Y-STR data will be retained as an extreme but common and important way in
which the problem of assessing the evidential value of rare type match can arise. Literature provides some examples of approaches to evaluate it for the rare Y-STR haplotypes match: \citet{egeland:2008}, the $\kappa$ method \citet{brenner:2010, brenner:2014},  the coalescent theory method \citep{andersen:2013}, the haplotype surveying method \citep{roewer:2000,krawczak:2001,willuweit:2011}, and the discrete Laplace method \citep{andersen:2013b} (not directly proposed for the rare haplotype case but usable for that purpose).
As already mentioned, \citet{cereda:2015} discusses the full Bayesian approach to this problem. 

Bayesian nonparametric estimators for the probability of observing a new type have been proposed by \citet[e.g.][]{tiwari:1989, lijoi:2007, favaro:2009}.
However, for the likelihood ratio assessment it is required not only the probability of observing a new species but also the probability of observing this same species twice (according to the defense the crime stain profile and the suspect profile are two independent observations). \citet{cereda:2015c} is the first paper that addresses the problem of likelihood ratio assessment in the rare haplotype case using Bayesian nonparametric models.

The present paper analyses two frequentist methods, the discrete Laplace method, and a generalization of the Good-Turing, making explicit the corresponding definitions of $D$, $E$, and $B$, and providing a study on the different levels of uncertainty arising for each.

\section{The discrete Laplace Method}\label{DLm}

A discrete random variable $X$ is said to follow the discrete Laplace distribution $\text{\text{DL}}(p,y)$, with dispersion parameter $p\in (0,1)$, and location parameter $y\in \mathbb{Z}$, if its probability density is defined as
$$f(x \mid p,y)=\left(\frac{1-p}{1+p}\right)p^{|x-y|}, \quad \forall x\in \mathds{Z}.$$

This  is used in \citet{andersen:2013b} to model the distribution of single locus Y-STR haplotype in some subpopulation, which is thus assumed to be distributed around a modal allele (represented by the location parameter $y$).

Each haplotype is actually composed by $r$ loci. Let denote with $\mathbf{X}=(X_1,X_2,..., X_r)$ the random variable which describes an $r$-loci haplotype configuration. Moreover, there may be $c$ different subpopulations to take into consideration. 
By making the strong assumption of independence between loci, within the same subpopulation, the following density is used to describe the probability that $\mathbf{X}=\mathbf{x}$: 

\[f(\mathbf{x}\mid \{\mathbf{y_j}\}_j,\{\mathbf{p_j}\}_j )=\sum_{j=1}^{c}\tau_j \prod_{k=1}^{r}f(x_k \mid y_{jk}, p_{jk}),\]
where, for each $j$, $\tau_j$ is the probability a priori of generating from the $j$th subpopulation, while $\mathbf{p_j}=(p_{j1},p_{j2},..., p_{jr})$ and $\mathbf{y_j}=(y_{j1},y_{j2},..., y_{jr})$ represent the dispersion and location parameters, respectively, of the $j$th subpopulation. 
\citet{andersen:2013b} propose to estimate all these parameters by using using the EM algorithm \citep{dempster:1977}. The initial subpopulation centres are chosen by PAM algorithm \citep{kaufman:2009} and the number of them by the Bayesian Information Criteria (BIC) \citep{schwarz:1978}.

\subsection{The choice of $D$ in the discrete Laplace Method} 

The choice of $D$ which underlies the discrete Laplace method, when used to address the rare haplotype match problem is:

\begin{itemize}
\item$D_{\text{DL}}$ = the particular haplotype $x$ of the suspect and of the stain, along with a database which is a sample from the population of possible perpetrators.
\end{itemize}

This method allows one to evaluate the data in the light of the usual hypotheses of interest in the DNA example (see Section~\ref{fvb}).
$D_{\text{DL}}$ can be split into $E_{\text{DL}}$ and $B_{\text{DL}}$, in the following way:
\begin{itemize}
\item $E_{\text{DL}}$ = the particular haplotype $x$ of the stain ($E_t$) and of the suspect ($E_s$). 
\item $B_{\text{DL}}$ = random sample from the population of possible perpetrators (i.e.\ database).
\end{itemize}
The vector containing the frequencies of all haplotypes in the population of reference can be thought of as the nuisance parameter $\theta$ of this model.
Conditions \textbf{a.}\ and \textbf{b.}\ presented in Section~\ref{fvb} are valid for $E_{\text{DL}}$, $B_{\text{DL}}$, $\theta$, and $h$, thus the following likelihood ratio (where $\theta$ is again omitted) corresponds to this choice of data, evidence, background, and model:
\begin{equation}
\begin{aligned}
\LRf_{\text{DL}}&=\frac{\Prf(D_{\text{DL}}=d \mid h_p)}{\Prf(D_{\text{DL}}=d \mid h_d)}=\frac{\Prf(E_t=x\mid E_s=x,h_p)\Prf(E_s=x \mid h_p)}{\Prf(E_t=x \mid E_s=x,h_d)\Prf(E_s=x \mid h_d)}\\&=
\frac{\Prf(E_t=x \mid E_s=x,h_p)}{\Prf(E_t=x \mid h_d)}=\frac{1}{f_x}.
\end{aligned}
\end{equation}
Here, $f_x$ is the frequency of the haplotype $x$ in the population of reference. The second equality is due to conditions \textbf{a} and \textbf{b} discussed in Section~\ref{freq}, while the fourth one is justified by the fact that the distribution of the haplotype of the suspect does not depend on which hypothesis is correct, and that, when $\theta$ is fixed (as in the frequentist approach which we are considering) and under $h_d$, $E_t$ is independent of $E_s$. 
The weight of evidence is thus
\begin{equation}\label{eqs}
\log_{10}{\LRf_{\text{DL}}}= \log_{10}\frac{1}{f_x}. 
\end{equation}
The frequency $f_x$ can be estimated by $\widehat{f_x}$, using the discrete Laplace method. This brings to the following plug-in estimator for $\log_{10}\LRf_{\text{DL}}$:
$$\widehat{\log_{10}{\LRf_{\text{DL}}}}= \log_{10}\frac{1}{\widehat{f_x}}.$$

Notice that the discrete Laplace method uses the database to estimate the number of subpopulations and all the parameters in the model, and this is where $B_{\text{DL}}$ comes into play again.

\subsection{Quantifying the uncertainty of the discrete Laplace method}\label{q-dis}

We quantify the uncertainty of this method comparing the distribution of $\widehat{\log_{10}{\LRf_{\text{DL}}}}=\log_{10}{\displaystyle \frac{1}{\widehat{f_x}}}$ with the distribution of the ``true'' 
$\log_{10}{\LRf_{\text{DL}}}= \log_{10}{\displaystyle \frac{1}{f_x}}.$
$f_x$ is not known, but we have a database of approximately 19,000 Y-STR 23-loci profiles from 129 different locations in 51 countries in Europe \citep{purps:2014}\footnote{A clean version of the database is provided by Mikkel Meyer Andersen (\url{http://people.math.aau.dk/~mikl/?p=y23}).}, which we can pretend contains the whole population of interest for our case. 
We will consider only 7 loci out of 23 and perform the following experiment: we sample a small database of size $N=100$, along with a new haplotype (not observed in the small database), and calculate the estimate $\log_{10}\frac{1}{\widehat{f_x}}$.
Then, we can use the relative frequency of the haplotype $x$ in the big database as the true one, $f_x$ to obtain $\log_{10}\frac{1}{f_x}$.

This process can be repeated many times (for instance $M=1000$  samplings of small databases of size $N=100$ and, for each, a never observed haplotype). 

In estimating $\log_{10}{\LRf_{\text{DL}}}$ via $\widehat{f_x}$, one has the choice between adding the haplotype $x$ to the small database before estimating 
parameters of the discrete Laplace distribution, or not. In a full Bayesian approach the right thing to do is to add the profile to the database. This is shown in \citet{cereda:2015}, and we believe that it is the good thing to do also in a frequentist framework. In fact, experiments show that to add or not the haplotype to the database does not make much difference.

Table~\ref{tab1f} and Figure~\ref{figbru} (left part) compare the distributions of $\log_{10}{\LRf}$ and $\widehat{\log_{10}{\LRf_{\text{DL}}}}$, using 7 loci. The same experiment has been carried out for 10 and 3 loci, but not reported in details. 

\begin{table}[htbp]
\begin{center}
    \begin{tabular}{|l|c|c|c|c|c|c|c|}
    \hline
 &  Min   & 1st Qu. & Median & Mean  & 3rd Qu. & Max   & s.d.   \\
  \hline
	  $\log_{10}\LRf_{\text{DL}}$   			&   1.305	&	2.733 	&   	3.277 	&  	3.272 	& 3.800 	&   4.277	&	0.666   \\
					 $\widehat{\log_{10}\LRf_{\text{DL}}}$    	& 1.432   	&	3.441  	& 	4.061 	&  	4.114 	&  4.750 	&  8.452  	& 	1.017  \\
					Error $e_{\text{DL}}$					& -1.37 	&  	0.217 	& 	0.807	& 	0.842	&  1.39	&  4.476 	& 	0.863 \\				

    \hline
     \end{tabular}

    \caption{Summaries of the distribution of  $\log_{10}{\LRf_{\text{DL}}}$, $\widehat{\log_{10}{\LRf_{\text{DL}}}}$, and of the error $e_{\text{DL}}$. }\label{tab1f}
\end{center}
\end{table}

\begin{figure}[htbp]
\medskip

\hspace{0.6\baselineskip}\hfil

\makebox[0.45\textwidth]{ $\widehat{\log_{10}\LRf_{\text{DL}}}$ and $\log_{10}\LRf_{\text{DL}}$}
\makebox[0.45\textwidth]{ Error }

\settoheight{\tempdim}{\includegraphics[width=0.4\textwidth]{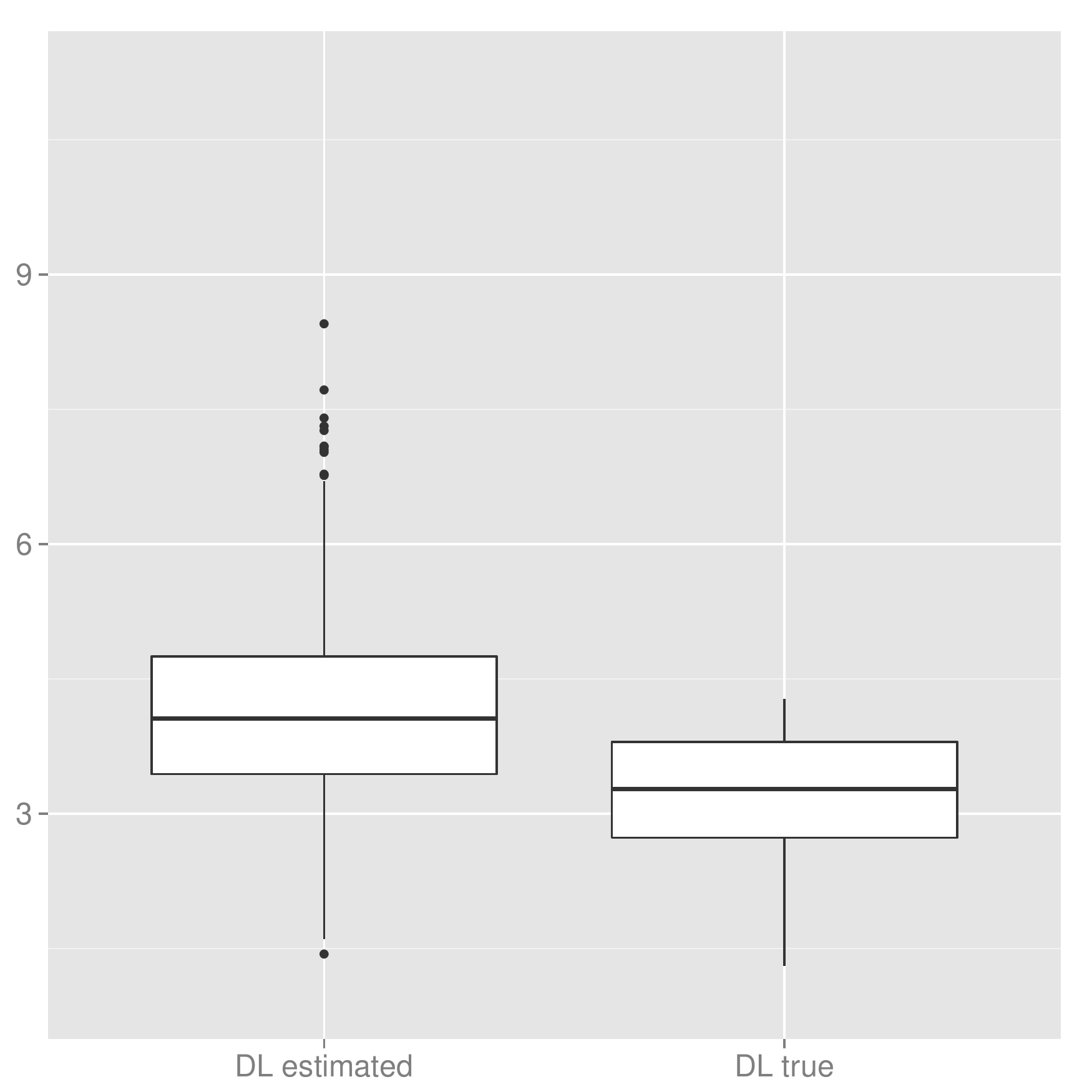}}
\subfigure[]{\includegraphics[trim=0cm 0cm 0cm 1.5cm, clip=true, width=0.45\textwidth]{Figure_3a.pdf}}
\subfigure[]{\includegraphics[trim=0cm 0cm 0cm 1.5cm, clip=true, width=0.45\textwidth]{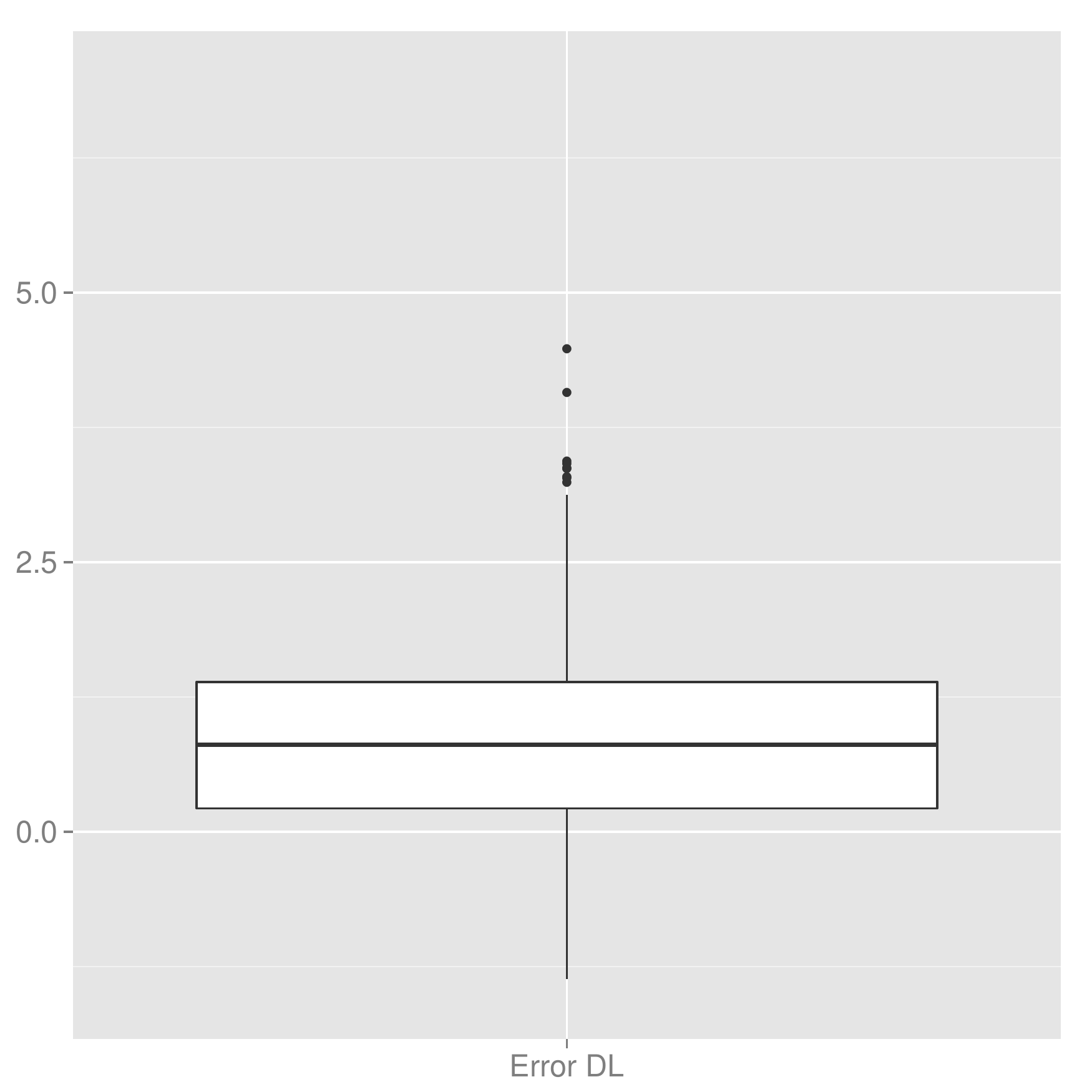}}

\caption{discrete Laplace method. Boxplots comparing the distributions of $\log_{10}\LRf_{\text{DL}}$ and $\widehat{\log_{10}\LRf_{\text{DL}}}$ (left) and the error $e_{\text{DL}}=\widehat{\log_{10}\LRf_{\text{DL}}}$- $\log_{10}\LRf_{\text{DL}}$ (2nd column).}\label{figbru}
\end{figure}

The error of the discrete Laplace method can be defined as $e_{\text{DL}}:=\widehat{\log_{10}{\LRf_{\text{DL}}}}-\log_{10}{\LRf_{\text{DL}}}$. It measures how much the estimated distribution differs from the true one. Table~\ref{tab1f} and Figure~\ref{figbru} (right part) show the distribution of the error. One can see that it can attain up to about 4 orders of magnitude. The distribution of the error is mostly located on positive values, which means that, more often than not, $\widehat{\log_{10}{\LRf_{\text{DL}}}}$ overestimates $\log_{10}{\LRf_{\text{DL}}}$. The standard deviation of the error is small, thereby $e_{\text{DL}}$ does not move too much away from its mean, which is about 0.842.

Motivated by the discussion of Section~\ref{levels}, we now analyze the different levels of uncertainty which affect the error. 
The second level of uncertainty is introduced when the discrete Laplace model, along with all its set of assumptions, is chosen to model the distribution of single locus haplotypes, which in reality do not follow a discrete Laplace distribution. 

The third level of uncertainty pertains to the estimation of the parameters of the model ($c$, $p$, $y$, $\tau$). Here, the databases used to estimate the parameters of the discrete Laplace model are probably too small ($N=100$) with regard to 7 loci.

To decrease both sources of error, one can reduce the number of analyzed loci to 3. The population becomes less sparse, and the databases big enough. Indeed, we performed this experiment and the error decreased a great deal. However, the basic level of uncertainty (see Section~\ref{levels}) is increased inasmuch the data becomes less effective to discern between the two hypotheses. On the other hand, the same experiment with 10 loci leads to obtain more powerful likelihood ratios, but less precise.

The second level of uncertainty can be made harmless assuming an infinite number of subpopulations, since in this way the model will perfectly fit any population. However, this solution will increase the number of parameters, along with the third level of uncertainty. 

It is worth underlining that the results of our simulations do not mean that the discrete Laplace method is wrong on the whole, but they show that a blind use of this method is dangerous. We are applying this method to the specific case of the rare haplotype match, using a database of size 100, and a rather sparse population: maybe this method was never intended to be used for such small databases, and maybe it can be modified in more clever ways to that purpose.

\section{The Generalized-Good method}\label{gg}
Based on \citet{good:1953}, we now propose a nonparametric estimator for the weight of evidence. This is a very good example of data reduction, since $\mathcal{D}$ is here reduced to a greater extent than it was done for the discrete Laplace method.
Indeed, the specific haplotype $x$ of the crime stain and of the suspect is ignored, retaining only the fact that they match and the fact that this profile has not been observed yet in the database.

Stated otherwise, 
\begin{itemize}
\item $D_{\text{GG}}$ = the haplotype of the suspect matches the haplotype of the crime stain and it is not in the database.
\end{itemize}

Consider the following mathematical description: the database of size $N$ can be seen as an i.i.d.\ sample $(Y_1,Y_2, ..., Y_N)$ from species $\{1,2, ...,S\}$, with probabilities $(p_1, p_2,... p_S)$.
Hence, the suspect's profile can be though of as the $N+1$st i.i.d.\ observation.
The crime stain's profile is the $N+2$nd observation. According to the defense it is again an i.i.d.\ draw from $(p_1, p_2,... p_S)$, while according to prosecution it is equal to the value of $Y_{N+1}$, with probability one.

The likelihood ratio for this reduction of the data can be thus written as
 $$\LRf_{\text{GG}}=\frac{\Prf( Y_{N+1}\notin  \mathcal{Y}_N, Y_{N+1}=Y_{N+2} \mid h_p) }{\Prf(Y_{N+1}\notin  \mathcal{Y}_N, Y_{N+1}=Y_{N+2}\mid h_d)} =\frac{\Prf( Y_{N+1}\notin \mathcal{Y}_N \mid \textcolor{black}{h_p}) }{\Prf(Y_{N+1}\notin  \mathcal{Y}_N, Y_{N+1}=Y_{N+2} \mid \textcolor{black}{h_d})}. $$

From now on, we are presenting results regarding a general database size $N>2$, and general random variables $Y_1, ..., Y_N$, i.i.d.\ from $(p_1,p_2,...,p_S)$. The following notation is used:\begin{align*}
\theta_1(N;p_1,p_2,...,p_S) := &\, \Prf(Y_N\notin \{Y_1,Y_2,..., Y_{N-1}\}),\\
\theta_2(N;p_1,p_2,...,p_S) := &\, \Prf(Y_N\notin \{Y_1,Y_2,..., Y_{N-2}\} , Y_{N}=Y_{N-1}).
\end{align*}
To make the notation less cumbersome we will use
\begin{align*}
&\mathcal{Y}_{N}:=(Y_1, Y_2, ..., Y_{N}),\\
&\mathcal{Y}_{i, N}:=(Y_1, Y_2, ..., Y_{i-1}, Y_{i+1},..., Y_{N}),\\
&\mathcal{Y}_{(i,j), N}:=(Y_1, Y_2,..., Y_{i-1}, Y_{i+1},..., Y_{j-1}, Y_{j+1},..., Y_{N}), \quad \forall i<j.
\end{align*}
Moreover, for any random variable $Y$, and any couple of sets $A$ and $B$, $\mathbf{1}_{A\cap B^\mathsf{c}}(Y)$ is a random variable which has value 1 if $Y$ belongs to the set $A$ and not to the set $B$, and zero otherwise.

\begin{teo}
An unbiased estimator for $\theta_1(N;p_1,p_2,...,p_S)$ is $\widehat{\theta}_1(N)=N_1/N$, where $N_1$ is the number of singletons in the database.
\end{teo}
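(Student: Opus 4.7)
The plan is to write $N_1$ as a sum of indicator random variables, one per observation in the database, and then compute its expectation using linearity of expectation together with the fact that the $Y_i$'s are i.i.d.

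First, I would note that an observation $Y_i$ is a singleton in the database $\mathcal{Y}_N = (Y_1, \ldots, Y_N)$ exactly when $Y_i \notin \mathcal{Y}_{i,N}$, that is, when $Y_i$ differs from every other $Y_j$ in the sample. Using the indicator notation introduced just before the theorem, this gives the identity
\[
N_1 \;=\; \sum_{i=1}^{N} \mathbf{1}_{\{Y_i \notin \mathcal{Y}_{i,N}\}}.
\]
Taking expectations and applying linearity yields
\[
\mathbb{E}[N_1] \;=\; \sum_{i=1}^{N} \Prf\!\bigl(Y_i \notin \mathcal{Y}_{i,N}\bigr).
\]

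Next, I would invoke the i.i.d.\ assumption on $(Y_1, \ldots, Y_N)$: the joint distribution is invariant under permutations of the indices, so in particular for every $i$,
\[
\Prf\!\bigl(Y_i \notin \{Y_1, \ldots, Y_{i-1}, Y_{i+1}, \ldots, Y_N\}\bigr) \;=\; \Prf\!\bigl(Y_N \notin \{Y_1, \ldots, Y_{N-1}\}\bigr) \;=\; \theta_1(N; p_1, \ldots, p_S).
\]
Summing over $i$ gives $\mathbb{E}[N_1] = N\,\theta_1(N; p_1, \ldots, p_S)$, and dividing by $N$ shows that $\widehat{\theta}_1(N) = N_1/N$ is unbiased.

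There is no real obstacle here, since the argument reduces to linearity of expectation plus the exchangeability built into the i.i.d.\ sampling model; the only subtlety worth highlighting explicitly is that the event ``$Y_i$ is a singleton'' depends on all $N$ coordinates of $\mathcal{Y}_N$, so one must resist the temptation to treat the $i$-th term as conditionally independent of the others — but the argument never needs independence of the indicators, only equality of their marginal probabilities, which exchangeability provides.
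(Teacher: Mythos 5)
Your argument is correct and is essentially the paper's own proof read in the opposite direction: the paper expands $\theta_1$ as the average of the $N$ exchangeable indicator expectations and identifies the sum with $N_1$, while you start from $N_1$ as a sum of indicators and use linearity plus exchangeability to reach $N\theta_1$. The key identity and the role of the i.i.d.\ assumption are the same in both.
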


\begin{proof}
\begin{align*}
\theta_1(N;p_1,p_2,...,p_S)&= \Prf(Y_N\notin \mathcal{Y}_{N-1})
=\mathbb{E}(\mathbf{1}_{(\mathcal{Y}_{N-1}) ^\mathsf{c}}(Y_N))=\frac{1}{N}\sum_{i=1}^{N}\mathbb{E}(\mathbf{1}_{(\mathcal{Y}_{i, N })^\mathsf{c}}(Y_i))\\
&=\mathbb{E}\left(\frac{1}{N}\sum_{i=1}^{N}\mathbf{1}_{(\mathcal{Y}_{i, N })^\mathsf{c}}(Y_i)\right)=\mathbb{E}\left(\frac{N_1}{N}\right),
\end{align*}
where the last equality is due to the fact that the function $\mathbf{1}_{(\mathcal{Y}_{i, N })^\mathsf{c}}(Y_i)$ has value 1 for every singleton of the database: the sum is thus the number of singletons ($N_1$). \qedhere

\end{proof}

\begin{teo}
An unbiased estimator for $\theta_2(N;p_1,p_2,...,p_S)$ is $\widehat{\theta}_2(N)=2 N_2/N(N-1)$, where $N_2$ is the number of doubletons in the database.
\end{teo}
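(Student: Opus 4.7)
The plan is to mirror the $\theta_1$ proof exactly, swapping the symmetry over single indices for a symmetry over unordered pairs of indices. There is no new conceptual ingredient, only a different index set for the averaging.

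First I would rewrite the target quantity as the expectation of an indicator,
\begin{equation*}
\theta_2(N;p_1,\ldots,p_S)=\mathbb{E}\bigl(\mathbf{1}_{\{Y_{N-1}\}\cap \mathcal{Y}_{N-2}^{\mathsf{c}}}(Y_N)\bigr),
\end{equation*}
so that the event ``$Y_N$ equals $Y_{N-1}$ but does not coincide with any of $Y_1,\ldots,Y_{N-2}$'' is encoded in the paper's indicator notation. The i.i.d.\ assumption then yields exchangeability at the level of pairs: the joint distribution of $(Y_i,Y_j,\mathcal{Y}_{(i,j),N})$ is the same as that of $(Y_{N-1},Y_N,\mathcal{Y}_{N-2})$ for every $1\le i<j\le N$.

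Averaging the expectation above over all $\binom{N}{2}$ such pairs therefore leaves it unchanged, so
\begin{equation*}
\theta_2(N;p_1,\ldots,p_S)=\frac{2}{N(N-1)}\,\mathbb{E}\!\left(\sum_{1\le i<j\le N}\mathbf{1}_{\{Y_i\}\cap \mathcal{Y}_{(i,j),N}^{\mathsf{c}}}(Y_j)\right).
\end{equation*}
The closing step is purely combinatorial: the $(i,j)$th summand is $1$ precisely when $Y_i=Y_j$ and this common value appears at no other position in $\mathcal{Y}_N$, i.e.\ when $\{i,j\}$ is the pair of indices of a doubleton. Each doubleton is therefore counted once and only once, so the inner sum equals $N_2$ and the estimator $2N_2/(N(N-1))$ is unbiased.

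I do not expect a genuine obstacle here; the only point meriting explicit verification is the counting constant. One must check that the factor $2/(N(N-1))$ is exactly right, which amounts to confirming that the ordering $i<j$ prevents double-counting of each doubleton, while the side condition $Y_i\notin\mathcal{Y}_{(i,j),N}$ excludes singletons, tripletons and higher-multiplicity types from contributing. Both are immediate, so the proof will essentially be a one-paragraph adaptation of the $\theta_1$ argument.
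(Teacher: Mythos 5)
Your proposal is correct and follows essentially the same route as the paper's proof: rewrite $\theta_2$ as the expectation of an indicator, use exchangeability of the i.i.d.\ sample to average over all $\binom{N}{2}$ ordered pairs $i<j$, and observe that the resulting sum counts exactly the $N_2$ doubletons. The only difference is cosmetic: you make the pair-level exchangeability step explicit, which the paper leaves implicit.
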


\begin{proof}
\begin{align*}
\theta_2(N;p_1,p_2,...,p_S)&= \Prf(Y_N\notin \{\mathcal{Y}_{N-2 } \}, Y_N=Y_{N-1})=\mathbb{E}(\mathbf{1}_{\{Y_{N-1}\cap (\mathcal{Y}_{N-2 })^\mathsf{c}\}}(Y_N))\\
&=\frac{2}{N(N-1)}\sum_{i < j}\mathbb{E}(\mathbf{1}_{\{ Y_j\cap (\mathcal{Y}_{(i,j),N})^\mathsf{c}\}}(Y_i))
=\mathbb{E}\left(\frac{2}{N(N-1)}\sum_{i < j}\mathbf{1}_{\{ Y_j\cap (\mathcal{Y}_{(i,j),N})^\mathsf{c}\}}(Y_i)\right)\\
&=\mathbb{E}\left(\frac{2N_2}{N(N-1)}\right), 
\end{align*}
where the last equality is due to the fact that the function $\mathbf{1}_{\{ Y_j\cap (\mathcal{Y}_{(i,j),N})\mathsf{c}\}}(Y_i)$ has value 1 for each of the $N_2$ doubletons of the database.\qedhere

\end{proof}

The two previous theorems can be easily generalized to $\theta_m$ defined as
$\theta_m(N;p_1, p_2,..., p_S):= \Prf(Y_N\notin \mathcal{Y}_{N-m}, Y_{N}=Y_{N-1}=..=Y_{N-m+1})$.
 
Now we can estimate $\log_{10}\LRf_{\text{GG}}$ in the following way:
\begin{align*}
\log_{10}\LRf_{\text{GG}}&= \log_{10} \frac{\Prf( Y_{N+1}\notin \mathcal{Y}_N \mid h_p) }{\Prf(Y_{N+1}\notin  \mathcal{Y}_N, Y_{N+1}=Y_{N+2}\mid h_d)}  \approx  \log_{10} \frac{\Prf( Y_{N}\notin  \mathcal{Y}_{N-1}) }{\Prf(Y_{N}\notin  \mathcal{Y}_{N-2}, Y_{N}=Y_{N-1})}\\
 &\approx  \log_{10} \frac{\theta_1(N;p_1,p_2,...,p_S)}{\theta_2(N;p_1,p_2,...,p_S)}.
\end{align*}

Thus, we propose the following estimator for the weight of evidence:
\begin{equation}\label{sas3}
\widehat{\log_{10}\LRf_{\text{GG}} }=\log_{10}\frac{\widehat{\theta_1}(N)}{\widehat{\theta_2}(N)}=\log_{10}\frac{(N-1)N_1}{2N_2} \approx \log_{10}\frac{ NN_1}{2N_2}.
\end{equation}

Notice that there are two kinds of approximation steps: a mathematical approximation of $\theta_1(N+1;p_1,p_2,...,p_S)$ with $\theta_1(N;p_1,p_2,...,p_S)$, which should hardly make any difference, for reasonably large $N$, and a statistical estimation of $\theta_1(N;p_1,p_2,...,p_S)$ using an unbiased estimator (and similarly for $\theta_2$).

It is important to underline that, due to Jensen's inequality, the estimators $\log_{10}\widehat{\theta_1}$ and  $\log_{10}\widehat{\theta_2}$ are not unbiased for $\log_{10}\theta_1$ and $\log_{10}\theta_2$, but it will be shown by simulations that $\widehat{\log_{10}\LRf_{\text{GG}}}$ is approximately unbiased for $\log_{10}\LRf_{\text{GG}}$. However, the point is not to find an unbiased estimator, but one with a small error rate.

Notice that in order to estimate $\log_{10}\LRf_{\text{GG}}$ it is not necessary to use all the information contained in the database, but only $N$, $N_1$, and $N_2$, that is the number of singletons and doubletons in the database. The nuisance parameter of the model is the vector  $\theta$ containing the frequencies of the Y-STR haplotypes in the population of interest. $\theta_1$ and $\theta_2$ are functions of $\theta$. 

The limitation of this method is that it cannot be used if $N_2=0$ (this corresponds to an infinite likelihood ratio) and it does not perform well  also in case the number of singletons is very small or zero. We believe it can be improved and extended by smoothing techniques \citep{good:1953, anevski:2013}, but we are going to ignore this problem.

The `$\kappa$-method' of Brenner \citep{brenner:2010} is based on an analogous line of reasoning. It estimates the likelihood ratio as 
$\widehat{\LRf}_{\kappa}=\frac{N^2}{N-N_1}.$
However, in the derivation of this estimator, there is an approximation involved, based on assumptions which are not always satisfied, leading sometimes to anti-conservatism (see also the discussion in \citet{buckleton:2011}, and the answer in \citet{brenner:2014}). In particular, \citet{brenner:2014} provides a pathological population where the approximation does not hold, while showing empirical evidence that for Fisher-Wright populations the condition is fulfilled. 
Our method is, on the other hand, based on a principled derivation of the estimator of equation \eqref{sas3}, which is similar to Brenner's one under the following conditions:
 there are almost only singletons and doubletons in the database, and
 $N_1\gg N_2$.
These assumptions are typically satisfied, explaining why Brenner's method often works. They also constitute a good description of when it does not work.

Lastly, we remark that this method can be generalized in the obvious way, to the case in which the haplotype is indeed in the database.
Moreover, this method is suitable to be directly applied to different kinds of evidence.

\subsection{Quantifying the uncertainty of the \text{GG} method}

As we did in Section~\ref{q-dis}, we want to quantify the uncertainty of this method. One way is to compare the distribution of $$\widehat{\log_{10}{\LRf_{\text{GG}}}}=\log_{10}\frac{NN_1}{2N_2},$$ with the distribution of the ``true'' 
$$\log_{10}{\LRf_{\text{GG}}}=\log_{10} \frac{\Prf( Y_{N+1}\notin  \mathcal{Y}_N) }{\Prf(Y_{N+1}\notin \mathcal{Y}_N \cap Y_{N+1}=Y_{N+2})} :=\log_{10}\frac{\theta_1}{\theta_2}.
$$

Actually, the latter is not a distribution, but a single value, unknown. Again, we pretend that the database of \citet{purps:2014} contains the profiles of the whole population, to find out the `true' $\theta_1$ and $\theta_2$, restricting our simulations to 7 loci. 
To do so, we sample $M$ small databases of size $N=100$, along with two other haplotypes. $\theta_1$ is the proportion of times in which the $(N+1)$st haplotype is a new one (i.e., not one of the previous $N$), and $\theta_2$ is the proportion of times in which the $(N+2)$nd is equal to the $(N+1)$st, and different from the first $N$ observations.
From our simulations, we used $M=100,000$, and we obtained $\theta_1$, $\theta_2$, and $\log_{10}\LRf$ as in Table~\ref{tab1bbig}. 
\begin{table}[htbp]
\begin{center}
    \begin{tabular}{|c|c|c|}
    \hline
 $\theta_1$    & $\theta_2$ & True $\log_{10}\LRf_{\text{GG}}$    \\
  \hline
0.748   	&0.0012	&   2.78				  \\

    \hline 
    \end{tabular}
    \caption{Values of $\theta_1$ and $\theta_2$ and of $\log_{10}\LRf_{\text{GG}}$   obtained by simulations, assuming that the database of \citet{purps:2014} contains the whole population of interest.  }\label{tab1bbig}
\end{center}
\end{table}

The distribution of $\widehat{\log_{10}\LRf_{\text{GG}} }=\log_{10}\frac{NN_1}{2N_2}$ can be obtained by sampling $M=100,000$ databases of size $N=100$. Out of 100,000 databases, 121 had $N_2=0$. They have been removed from the data, and we acknowledge that this choice creates unfairness to the discrete Laplace method. On the other hand, we believe that this occurs frequently enough not to affect very strongly the comparison.

\begin{figure}[htbp]
\medskip

\begin{center}
\includegraphics[width=0.4\textwidth]{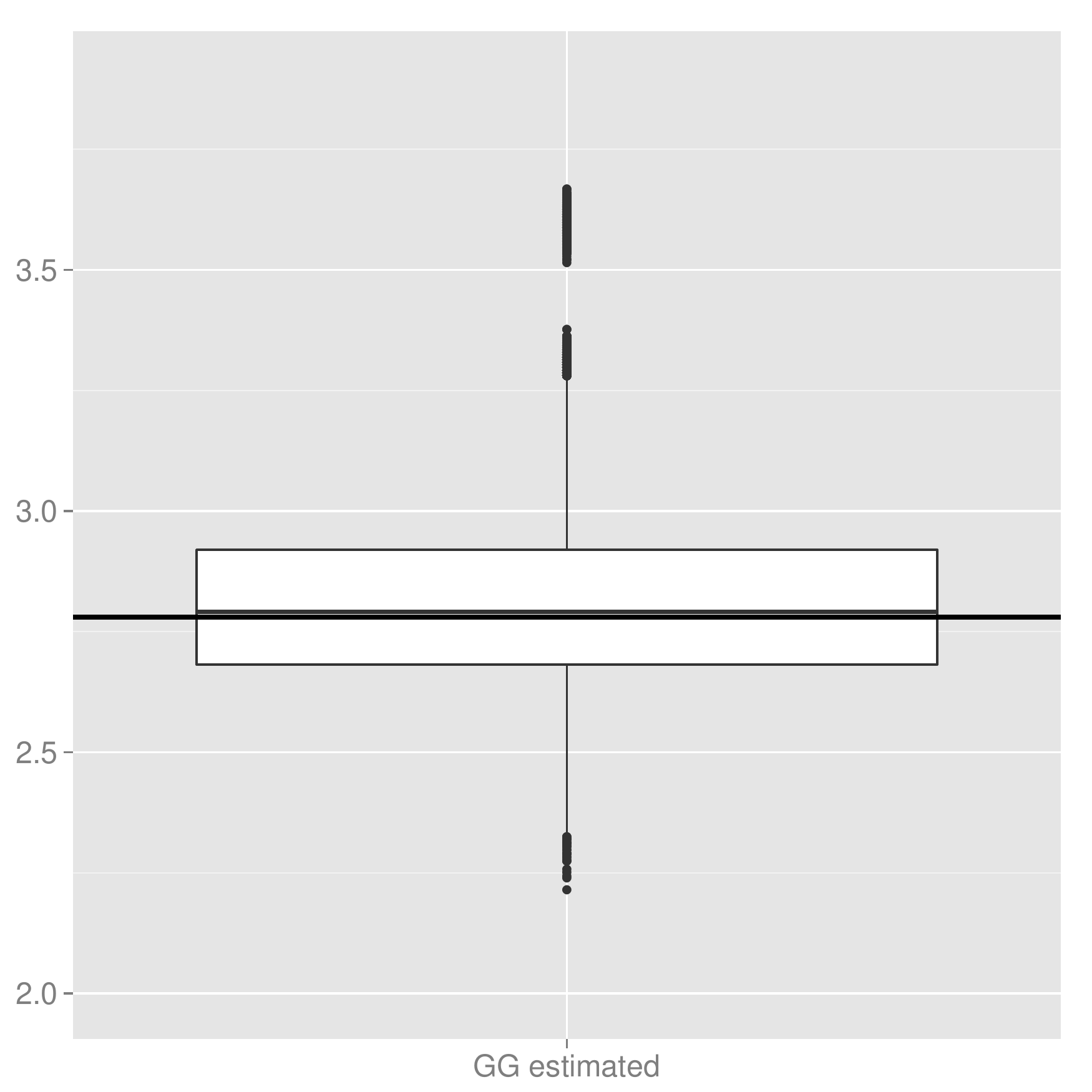}
\end{center}
\caption{Boxplots of the distribution of $\widehat{\log_{10}\LRf_{\text{GG}} }$ around the true value $\log_{10}\LRf_{\text{GG}}$ (black line).}\label{figure3}
\end{figure}

Figure~\ref{figure3} shows the distribution of the estimator  $\widehat{\log_{10}\LRf_{\text{GG}} }$ around the true value (black line).
The error of the Generalized-Good method, defined as $e_{\text{GG}}= \widehat{\log_{10}\LRf_{\text{GG}} }-\log_{10}\LRf_{\text{GG}} $, tells us how much the estimator differs from the true value. 

\begin{table}[htbp]
\begin{center}
    \begin{tabular}{|l|c|c|c|c|c|c|c|}
    \hline
 &  Min   & 1st Qu. & Median & Mean  & 3rd Qu. & Max   & sd   \\
  \hline
  $\log_{10}\LRf_{\text{GG}}$   			&  2.78	&  2.78 	&  2.78 	&  2.78 	&  2.78 	&  2.78	&	0   \\
					$\widehat{\log_{10}\LRf_{\text{GG}}}$    	& 2.215	&   2.682 		&  2.792		&	   2.818	&   2.920	&  3.668 	& 0.198  \\
					Error $e_{\text{GG}}$					& -0.566	& -0.098&  0.0112 &  0.038&  0.14  & 0.887 & 0.198 \\				
   
    \hline 
    \end{tabular}
 
    \caption{Summaries of the distribution of $\widehat{\log_{10}\LRf_{\text{GG}}}$, of $\log_{10}\LRf_{\text{GG}}$, and of the error $e_{\text{GG}}$. }\label{tab1bigd}
\end{center}
\end{table}

Table~\ref{tab1bigd} provides the summaries for $\widehat{\log_{10}\LRf_{\text{GG}} }$, and for the error $e_{\text{DL}}$. 
We don't provide the plots for the distribution of $e_{\text{GG}}$ since they are identical to those in Figure~\ref{figure3}, shifted of $\log_{10}\LRf_{\text{GG}}$.

One can see that the error can attain up to about 0.9 orders of magnitude. The distribution of the error is mostly located on the positive values, which means that, more often than not, $\widehat{\log_{10}{\LRf_{\text{GG}}}}$ overestimates $\log_{10}{\LRf_{\text{GG}}}$. The standard deviation of the error is small, thereby $e_{\text{GG}}$ does not move too much away from the mean, which is about 0.038. 
If compared to the error of the discrete Laplace  method, one can conclude that here we get a better estimator in terms of accuracy, since the error ranges over more restrained values and the standard deviation is much smaller. However, it is important to keep in mind that they are not different estimators of the same quantity, but different estimators of different quantities, since the reduction of data used by the Generalized-Good method, which allows obtaining accuracy in the estimates is less strong to discern between the two hypotheses.
\section{Choosing and comparing methods}\label{trade}
In comparing the two methods one can consider the precision with respect to what the method is trying to estimate, quantified by the errors $e_{\text{DL}}$, and $e_{\text{GG}}$. These errors are due to the two second and third level of uncertainty described in Section~\ref{levels}, and decrease sensibly if data is reduced. This is why, under this aspect, the Generalized-Good is to be preferred to the discrete Laplace, and for the latter a fewer number of loci is to be preferred.
However, it is not correct to believe that the greater the reduction, the better is the method. To reduce means to lose information, and thus to diminish the capability of the method to distinguish between the hypotheses at stake (the first, or basic level of uncertainty).
In order to investigate this loss, one can compare each method to the likelihood ratio $1/f$ (where $f$ is the population frequency of the matching haplotype), which can be considered the hallmark in a population with no substructure. 
Comparing Table~\ref{tab1f} with Table~\ref{tab1bbig} one can see for instance that choosing the Generalized-Good one loses on average around $0.5$ (in logarithmic scale) in terms of strength of data to discriminate between hypotheses. This is a small disadvantage for the prosecution, while everybody gain in terms of precision with respect to the true $\log_{10}{\LRf_{\text{GG}}}$.
As a last remark, we invite the reader to realize that the discrete Laplace method is better inasmuch it can always be used. On the other hand, for the Generalized-Good, we had to remove 121 experiments where $N_2=0$.

 \section{Remark and conclusion}

The aim of this paper could, at first sight, be considered that of offering two additional frequentist methods to address the issue of the likelihood ratio calculation in the case of a rare Y-STR haplotype match.
However, a careful reader may have realized that these methods also constitute two interesting opportunities to show and apply the guidelines exposed in the opening sections. In particular, two important facts are pointed out in Sections~\ref{LR} and \ref{levels}: first, it is more sensible to talk about ``a'' likelihood ratio instead of ``the'' likelihood ratio, and second, a quantification of the error involved in the estimation is to be provided along with the estimate of the likelihood ratio.

Moreover, it is explained that sometimes it is possible to the break down the data to be evaluated into $E$ (which is sufficient for $H$) and $B$ (which is irrelevant for $H$). 
The discrete Laplace method (developed in Section~\ref{DLm}) is a good example where this distinction can be done, while the same is not true for the Generalized-Good method (Section~\ref{gg}). 

Lastly, this paper wants to get across the message that reducing the data to a smaller extent is sometimes not only necessary, but also desirable in terms of exactitude of the estimates, as proved by the comparison between the discrete Laplace method (less reduction, less precision of the estimates) and the Generalized-Good method (stronger reduction, more precision of the estimates). In this respect we disagree with \citet{buckleton:2011} who, talking about Brenner's method, state that `there is a merit focussing in the type or name of a lineage marker''. Although we agree that ``such ignorance of type implies a substantial loss of information'', it may allow a large gain in precision.

The take home message is that choosing the best method is clearly a very delicate task. One has to consider many different aspects, and look for a compromise which is acceptable for the specific application at hand. It is important to realise that in this paper we study a very extreme situation with very small databases and a possibly unrealistic population, for which the Generalized-Good seemed to be the best compromise. Clearly, there are no possible general conclusions to be given, other than at each new situation one has to reconsider all these aspects, and weigh them.

\section*{Acknowledgements}

The Generalized-Good method described here was suggested by Richard Gill and presented in several conference lectures, see for instance \url{http://www.slideshare.net/gill1109/the-fundamental-problem-of-forensic-statistics-38322519}.
I am indebted to Charles Brenner (the first to use the `fundamental problem' name), and to Ronald Meester, for the useful discussions about this paper, which lead to many improvements.
This research was supported by the Swiss National Science Foundation, through grants no.\ 105311-1445570 and 10531A-156146/1, and carried out in the context of a joint research project, supervised by Franco Taroni (University of Lausanne, Ecole des sciences criminelles,  Facult\'e de droit, des sciences criminelles et d'administration publique), and Richard Gill (Mathematical Institute, Leiden University).


\end{document}